\newcommand{\INKA}{\texttt{INKA}}
\newtheorem{lemma}{Lemma}
\begin{document}

\title{INKA: An Ink-based Model of Graph Visualization}

\author{Quan~Hoang~Nguyen
\IEEEcompsocitemizethanks{\IEEEcompsocthanksitem Q. Nguyen is with the School of Information Technologies, University of Sydney, Australia.\protect\\
E-mail: \{quan.nguyen\}@sydney.edu.au
}
\thanks{}}

\markboth{Journal of \LaTeX\ Class Files,~Vol.~1, No.~1, January~2018}%
{Shell \MakeLowercase{\textit{et al.}}: Ink-based model of graph visualizations}


\IEEEcompsoctitleabstractindextext{%
\begin{abstract}
Common quality metrics of graph drawing have been about the readability criteria, such as small number of edge crossings, small drawing area and small total edge length. Bold graph drawing considers more realistic drawings consisting of vertices as disks of some radius and edges as rectangles of some width. However, the relationship that links these readability criteria with the rendering criteria in node-link diagrams has still not been well-established. 

This paper introduces a model, so-called \INKA~(Ink-Active), that encapsulates mathematically the relationship between all common drawing factors. Consequently, we investigate our \INKA~model on several common drawing algorithms and real-world graphs.
\end{abstract}

\begin{keywords}
ink model, readability, drawing factors, bold drawing, graph-ink
\end{keywords}}

\maketitle

\IEEEdisplaynotcompsoctitleabstractindextext

%
\IEEEpeerreviewmaketitle


\ifCLASSOPTIONcompsoc
  \noindent\raisebox{2\baselineskip}[0pt][0pt]%
  {\parbox{\columnwidth}{\section{Introduction}\label{sec:introduction}%
  \global\everypar=\everypar}}%
  \vspace{-1\baselineskip}\vspace{-\parskip}\par
\else
  \section{Introduction}\label{sec:introduction}\par
\fi

%
%
\IEEEPARstart{D}{rawing} graphs has been extensively studied~\cite{Battista:1998:GDA:551884,DBLP:reference/crc/2013gd} and have been successfully applied in cross domains including social, business, and biology.


Readability criteria are the most common measures for the quality of graph drawings and have been aimed in many layout algorithms \cite{Battista:1998:GDA:551884,purchase1996validating,purchase1997aesthetic,DBLP:reference/crc/2013gd}. They include, for example, 
\begin{itemize}
	\item \emph{edge crossing}(few edge crossings), 
	\item \emph{edge length} (small total edge length), 
	\item \emph{area} (small area of a grid drawing).
\end{itemize}
These readability criteria are shown effective and algorithms built-in with these criteria have been successfully applied in many domains.
There have been a number of empirical evaluation of readability measures for graph drawing~\cite{purchase1996validating,purchase1997aesthetic,vismara2000experimental}.
Previous work has shown that improving multiple aesthetics can produce better graph drawing in terms of human perception~\cite{huang2013improving}.

A more realistic view of graph drawing considers rendering factors, such as \emph{node size} and \emph{edge width}~\cite{vanKreveld2011}. 
Bold graph drawing draws every vertex of a graph by a disk of radius $r$,
and every edge by a rectangle of width $w$, for some non-negative numbers $r$ and $w$~\cite{vanKreveld2011,Pach2012}.

Here, we distinguish drawing factors into two categories: \emph{layout} factors and  \emph{rendering} factors. The layout factors include, for examples, edge crossing, edge length and drawing area. The rendering factors include, for example, node size, edge width, node / edge colors, shading and transparency.

Despite a plethora of studies of the abstract graph readability concepts and the rendering of graphs as node-link diagrams, previous work has focused only on each quality type. There is not much research that has studied the relationship between these two types of quality.
As such,  optimized drawings for abstract graph readability do not consider actual rendered results; while rendering criteria alone are not sufficient with the absence of some readability criteria.

In this paper, we are interested in studying the relationship between the two types of drawing quality (the abstract graph readability and the rendering of the graphs) in node-link diagrams. In particular, we will investigate a relationship of the most common layout factors (edge length, edge crossings and drawing area) with the most common rendering factors (node size and edge width).
These drawing factors are crucial for a good graph drawing.




Specifically, we introduce a model, so-called \INKA~(Ink-Active), which mathematically addresses the relationship between the common drawing factors. We explore the relationship with the two criteria using the expression of the amount of ink. 
Intuitively, the new model has been built based on the well-known concept of ink-data ratio by Tufte~\cite{tufte1983visual}. 

With the \INKA~model, our aim is to determine the relationship for selected common drawing factors in graph drawing.
The \INKA~model also leads to some guidelines for choosing drawing factors in graph layout algorithms.

We must stress that we do not aim in this paper for an algorithm to optimize for both quality types, nor to compare between the importance of these criteria. Instead, these will be considered in our future work.




In summary, the paper makes the following contributions:
\begin{itemize}\setlength{\itemsep}{1pt}  \setlength{\parskip}{0pt}\setlength{\parsep}{0pt}
	\item We have proposed a new model, called \INKA, for expressing the relationship between the important layout factors (edge  crossings, edge length and the drawing area) with the rendering factors (node size, edge width).
	\item We examine our \INKA~model and evaluate it using several common graph drawing algorithms.
	\item We also evaluate \INKA~model using real-world graphs and standard multi-level force-directed graph layouts.
\end{itemize}

The rest of the paper is organized as follows. 
Section~\ref{sec:related} gives related work. Section~\ref{sec:inka} describes our \INKA~model. Section~\ref{sec:opti} gives some examples of optimization of rendered results using \INKA~model; Section~\ref{sec:studies} gives several studies of \INKA~on common drawing approaches. Section~\ref{sec:eval} gives some evaluation of the \INKA~model using real-world graphs. Section~\ref{sec:diskussion} gives some diskussions and Section~\ref{sec:conclusion} concludes.

\section{Related work~\label{sec:related}}



\subsection{Graph drawing quality metrics}
Criteria for `good' graph visualization have been investigated extensively~\cite{Battista:1998:GDA:551884}.
Graph drawing algorithms over the years typically take into account one or more aesthetic criteria for better \emph{readability} of the drawing.
These aesthetic criteria include, for example,
\begin {enumerate} 
\item minimizing the number of edge crossings~\cite{reingold1981tidier};
\item minimizing the total area~\cite{tamassia1988automatic};
\item edge lengths should be short but not too short~\cite{coleman1996aesthetics}.
\end{enumerate}
Amongst these aesthetics, small number of edge crossings is one of the most common criterion from previous user studies~\cite{purchase1997aesthetic}. Besides, the amount of ink and minimum total edge length has been used in many layout algorithms; for example, ~\cite{sugiyama1994methods,north2001online,gansner2006improved,gansner2011multilevel}. Achieving small total area is another common approach~\cite{tamassia1988automatic}. Overall, improving multiple aesthetics can produce better graph drawings~\cite{huang2013improving}.


However, there is not much research to model and understand the relationship between the drawing factors. 


\subsection{Ink model and data-ink}
Tufte's principle of 'maximizing \emph{data-ink}' is well-known for data visualization~\cite{tufte1983visual}. The \emph{data-ink} is the non-erasable ink that presents data; removing a data-ink from the display would cause information loss. The \emph{data-ink ratio} measures the ratio of data-ink to the total ink used. For a fixed piece of information, maximizing the ratio corresponds to minimizing the amount of ink.


A number of graph drawing algorithms have aimed for a minimum total edge length, or more precisely, a minimum amount of ink. This criterion has been  studied~\cite{sugiyama1994methods,north2001online,gansner2006improved,gansner2011multilevel}.

In this paper, we denote the concept of 
\emph{ink effectiveness}, which is the inverse of \emph{data-ink ratio}. This is equivalent to the ratio of ink over data. Given the pictures to visualize the same data (graph), the picture using less ink is more ink-effective.

\section{\INKA: Ink-Active model~\label{sec:inka}}

This section presents our \INKA~model that formally shows relationship between the abstract graph readability and the rendering of graphs. Specifically, the model aims to draw a connection between the most common drawing factors of the two types. The new model is based on the amount of ink used in drawing graphs.

\subsection{Problem definitions and notations}
We first define several criteria for 'proper' drawings of graphs. Some notations are borrowed from bold graph drawing~\cite{vanKreveld2011}.

Given a graph $G=(V,E)$, a layout algorithm decides a mapping of each node $v$ in $V$ to a location $p_v$ in 2D. Vertices then are drawn as solid disks of a radius $r$ and edges are represented by straight-line segments to connect adjacent nodes.
Edges are often considered as having zero or negligible width, but realistically they are drawn by rectangles with a width $w$. The values of disk radius $r$ and edge width $w$ are non-negative. A bold drawing $D$ of $G$ is the union of these disks and rectangles.

In this paper, a bold drawing is \emph{proper} if the following conditions are met:
\begin{enumerate}
	\item No two disks intersect.
	\item Any point in the drawing belongs to at most two edges.
	\item
	Any pair of edges can cross each other at most once.
\end{enumerate}
Throughout this paper, we only consider proper bold graph drawing, those that satisfy the above conditions.

Next, we present the \INKA~model, which  models mathematically the ink requirements from layout specifications.

\subsection{INKA-total\label{sec:inkaEq}}

The total ink used in a bold drawing $D$ of $G$ is given in the following \emph{INKA-total} equation:
\begin{equation}\label{eq:inka}
ink(D) = ink(V) + ink(E) - overlap,
\end{equation}

where 
\begin{itemize}
	\item $ink(V)$ is the total ink for all vertices; 
	\item $ink(E)$ is the total ink for all edges (minus the intersection between disks and rectangles); and 
	\item $overlap$ is the total ink that is saved from overlapping between the edges. 
\end{itemize}


Additional notations are given as follows: 
\begin{itemize}
	\item 
Let $l_e$ denote the length of an edge $e$.
\item 
Let $L$ be the sum of all edge length $L = \sum_{e \in E} l_e$. 
\item Let $cr(D)$ denote the number of edge crossing in $D$. 
\item Let $m$ ($n$) denote the total number of edges (nodes).
\end{itemize}

From the INKA-total equation in Equation~\ref{eq:inka}, the total ink in the drawing $D$ can be computed as follows:
\begin{lemma}
	\begin{equation}\label{eq:inka1}
	ink(D) = n \pi r^2 + w (L - 2mr) - w^2. cr(D)
	\end{equation}
\end{lemma}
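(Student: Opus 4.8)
The plan is to evaluate the three terms on the right-hand side of the INKA-total equation (Equation~\ref{eq:inka}) separately and then recombine them. Since $ink(D)=ink(V)+ink(E)-overlap$, it suffices to establish the three identities $ink(V)=n\pi r^2$, $ink(E)=w(L-2mr)$, and $overlap=w^2\,cr(D)$; substituting them into Equation~\ref{eq:inka} then yields Equation~\ref{eq:inka1} immediately.

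First I would handle $ink(V)$ and $ink(E)$ together, since both are area computations over the rendering primitives. Each of the $n$ vertices is a solid disk of radius $r$ contributing area $\pi r^2$, and properness Condition~1 guarantees that the disks are pairwise disjoint, so the areas simply add to give $ink(V)=n\pi r^2$ with no correction. For the edges, each edge $e$ is a width-$w$ rectangle spanning the length-$l_e$ segment between its two endpoint disk centers, so its raw area is $w\,l_e$. By definition $ink(E)$ discounts the portion of each edge rectangle already covered by its endpoint disks (that ink is booked under $ink(V)$); modelling the strip of width $w$ from a disk center out to its boundary as an $r\times w$ rectangle, each of the two endpoints removes area $rw$, leaving $w(l_e-2r)$ per edge. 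Summing over the $m$ edges and using $L=\sum_{e\in E}l_e$ gives $ink(E)=\sum_{e}w(l_e-2r)=wL-2mrw=w(L-2mr)$.

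Next I would compute $overlap$. Wherever two edge rectangles cross, the crossing region is counted once in each of the two edges and therefore enters $ink(E)$ twice, so $overlap$ is precisely the doubly counted ink that must be removed. Treating the intersection of two width-$w$ strips at a crossing as a $w\times w$ square of area $w^2$, and invoking properness Conditions~2 and~3 (every point lies on at most two edges and each pair of edges crosses at most once, so overlaps neither stack nor repeat), the saved ink is exactly $w^2$ per crossing, giving $overlap=w^2\,cr(D)$. Substituting $ink(V)$, $ink(E)$, and $overlap$ into Equation~\ref{eq:inka} then produces Equation~\ref{eq:inka1}.

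I expect the main obstacle to be geometric rather than combinatorial. The true disk–rectangle intersection at an endpoint is $\int_{-w/2}^{w/2}\sqrt{r^2-y^2}\,dy$ rather than exactly $rw$, and the overlap at a crossing meeting at angle $\theta$ is $w^2/\sin\theta$ rather than exactly $w^2$; both collapse to the stated values only in the thin-edge regime $w\ll r$ and for near-orthogonal crossings. The crux of a clean argument is therefore to declare these as the explicit modelling assumptions (or first-order approximations) under which Equation~\ref{eq:inka1} is exact, and to cite the three properness conditions precisely at the points where disjointness of disks, the two-edge bound, and the single-crossing bound are each used to ensure no ink is added or subtracted more than once.
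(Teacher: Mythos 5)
Your proposal follows essentially the same route as the paper: decompose $ink(D)$ as $ink(V)+ink(E)-overlap$, assign $\pi r^2$ per disk, $w(l_e-2r)$ per edge rectangle after deducting the endpoint disks, and $w^2$ per crossing. Your explicit flagging of the geometric approximations (the true disk--rectangle intersection and the $w^2/\sin\theta$ crossing area) is a welcome refinement, and it matches the paper's own admission immediately after the proof that the estimate is ``quite simplistic''; otherwise the arguments coincide.
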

\begin{proof}
	Each disk representing a vertex $v$ takes $\pi r^2$ pixels. The total amount of ink for the disks are $ink(V) = n r^2$.
	
	Each rectangle representing an edge $e$ has  length of $l_e - 2r$ and width of $w$. After subtracting the intersections of the two adjacent disks (vertices), the rectangle has length of $l_e-2r$; thus, the remaining rectangle takes $w(l_e - 2r)$ pixels.
	The total amount of ink for the disks is $ink(E) = \sum_{e \in E} w(l_e - 2r) = w (L - 2mr)$.
	
	The value of $overlap$ is equal to the total amount of ink that is saved from rectangle-rectangle crossing. That is, $overlap$ is proportional to the total number of edge crossings $cr(D)$ in the drawing $D$. Thus, $overlap$ can be approximated by $w^2 . cr(D)$.
	
	Thus, the total amount of ink in $D$ can be written precisely as $ink(D) = ink(V) + ink(E) - overlap = n \pi r^2 +  w(L - 2mr)  - w^2 .cr(D)$. \qed
\end{proof}	

Note that, the above approximation of ink is quite simplistic. However, this can give a quick estimation of the amount of ink for a graph layout before actual rendering of the graph.


\subsection{INKA-area\label{sec:inkaIneq}}
Here, we present a mathematical model of the relationships for \emph{drawing area} and \emph{drawing density}.

The \emph{drawing density} defines the proportion of total ink $ink(D)$ over total drawing area.
Intuitively, this approach aims for a drawing that requires small drawing area but at the same time the drawing is not too dense. In fact, small drawing area is a common criterion in graph drawing~\cite{tamassia1988automatic}.

	Then the drawing density satisfies the condition
$$ink(D) / A \leq \gamma,$$ where
\begin{itemize}
	\item $A$ is the drawing area. 
	\item $\gamma$ is the maximum drawing density that is good for drawing. 
\end{itemize}
 
 Generally, the value of $\gamma$ is specified by users. The default value of $\gamma$ is 1. 
 
The below \emph{INKA-area} inequality captures the relationship of the most common drawing factors, given by:
\begin{lemma}
	\begin{equation}\label{eq:drawingratio}
	ink(D) =  n \pi r^2 + w (L - 2mr) - w^2. cr(D) \leq \gamma A,
	\end{equation}
\end{lemma}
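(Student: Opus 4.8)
The plan is to obtain this inequality by combining the two ingredients that are already in place: the exact expression for $ink(D)$ established in the preceding \emph{INKA-total} lemma (Equation~\ref{eq:inka1}), and the \emph{drawing density} constraint $ink(D)/A \leq \gamma$ introduced just above. The equality in the displayed statement is nothing more than a restatement of Equation~\ref{eq:inka1}, so the substantive content lies entirely in the final inequality $ink(D) \leq \gamma A$.

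First I would invoke Equation~\ref{eq:inka1} to write $ink(D) = n\pi r^2 + w(L - 2mr) - w^2 \cdot cr(D)$; this supplies the leftmost equality for free, with no new computation. Next I would start from the density bound $ink(D)/A \leq \gamma$, observing that any non-degenerate bold drawing occupies a bounding region of strictly positive area, so that $A > 0$. This allows me to clear the denominator without reversing the inequality, yielding $ink(D) \leq \gamma A$. Chaining the two facts together then produces the desired $n\pi r^2 + w(L - 2mr) - w^2 \cdot cr(D) = ink(D) \leq \gamma A$, which is exactly the \emph{INKA-area} statement.

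The only point that needs care --- and the closest thing to an obstacle --- is the logical status of the density bound itself. It is imposed as a user-specified quality requirement (with default $\gamma = 1$) rather than derived from the geometry, so the lemma must be read as a conditional statement: \emph{whenever} a proper bold drawing meets the density requirement, its ink, expressed through the layout factors ($L$, $cr(D)$) and the rendering factors ($r$, $w$), is bounded by $\gamma A$. I would therefore make the hypothesis explicit at the outset, and note that $A$ denotes the positive drawing area so that multiplication by $A$ is legitimate. Beyond this bookkeeping no estimation is required; the result is an immediate consequence of the earlier lemma together with the assumed density bound.
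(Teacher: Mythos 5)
Your proposal is correct and matches the paper's own argument exactly: it multiplies the density constraint $ink(D)/A \leq \gamma$ by $A$ and substitutes the expression for $ink(D)$ from Equation~\ref{eq:inka1}. Your added remarks about $A>0$ and the conditional reading of the lemma are sensible bookkeeping but do not change the route.
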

\begin{proof}
	From $ink(D) / A \leq \gamma$, we can deduce that $ink(D) \leq \gamma A$.	Then we use the evaluation of $ink(D)$ in equation~\ref{eq:inka1}. \qed
\end{proof}


\section{Drawing optimization \label{sec:opti}}
From the INKA-area inequality, we can compute the bounds for selected drawing factors.

\subsection{Disk radius $r$}
Given a fixed maximum density, one can find a hard upper bound of node radius. 

The radius of a disk is given by $r \leq \sqrt{ \gamma A / (n \pi)}$. The equality holds when edge width $w$ is 0. 
For example, with a 10x10 drawing ($A$ = 100) and drawing density $\gamma$ = 0.5, drawing 4 nodes would require the radius of each node no greater than $\sqrt{ .5 \times 100 / (4 \pi)}$ = 1.99. 

In general, the inequality can be rewritten as:
$ink(D) = \pi n (r - \frac{mw}{\pi n})^2 + wL -w^2.cr(D) - \frac{m^2 w^2}{\pi n} \leq \gamma A$. Hence, $\pi n (r - \frac{mw}{\pi n})^2 \leq \gamma A - wL + w^2.cr(D) + \frac{m^2 w^2}{\pi n}$. So the radius is bounded by:
$max(0, -\sqrt{\frac{B}{\pi n}} + \frac{wm}{\pi n}) \leq r \leq \sqrt{\frac{B}{\pi n}} + \frac{wm}{\pi n},$ where $B = \gamma A - wL + w^2.cr(D) + \frac{m^2 w^2}{\pi n}$.

In general, a larger radius $r$ requires more ink and thus it leads less ink-effective drawing. 
Figures~\ref{fig:examples}(a)-(b) depict drawings of the same graph using different disk radius.

\begin{figure*}\centering
	\subfloat[Large nodes]{
		\includegraphics[width=.2\textwidth]{./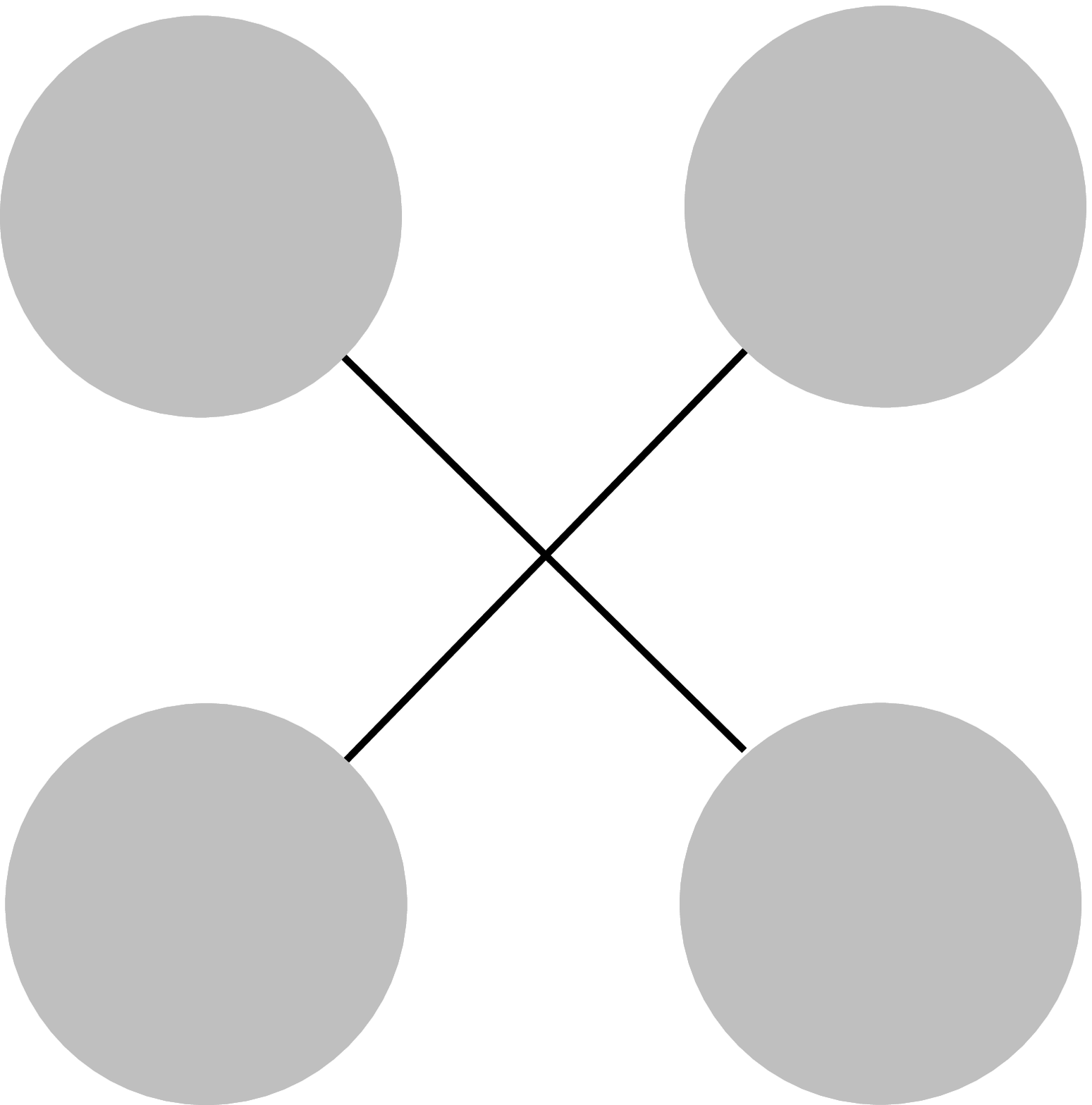}}
	\hspace{5pt}
	\subfloat[Small nodes]{
		\includegraphics[width=.08\textwidth]{./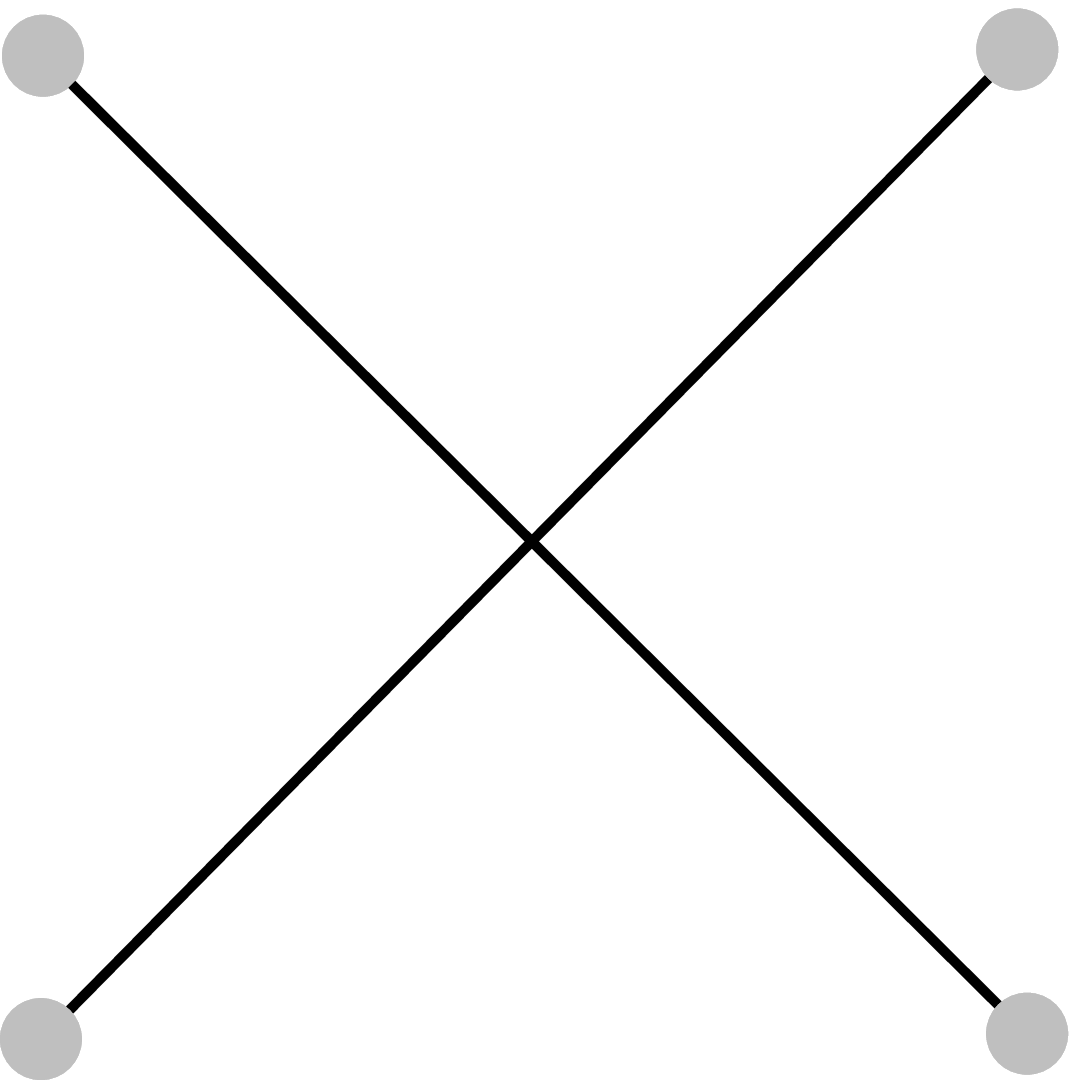}}
	\hspace{.5cm}
	\subfloat[Thin edges ($t$=1)]{
		\includegraphics[width=.15\textwidth]{./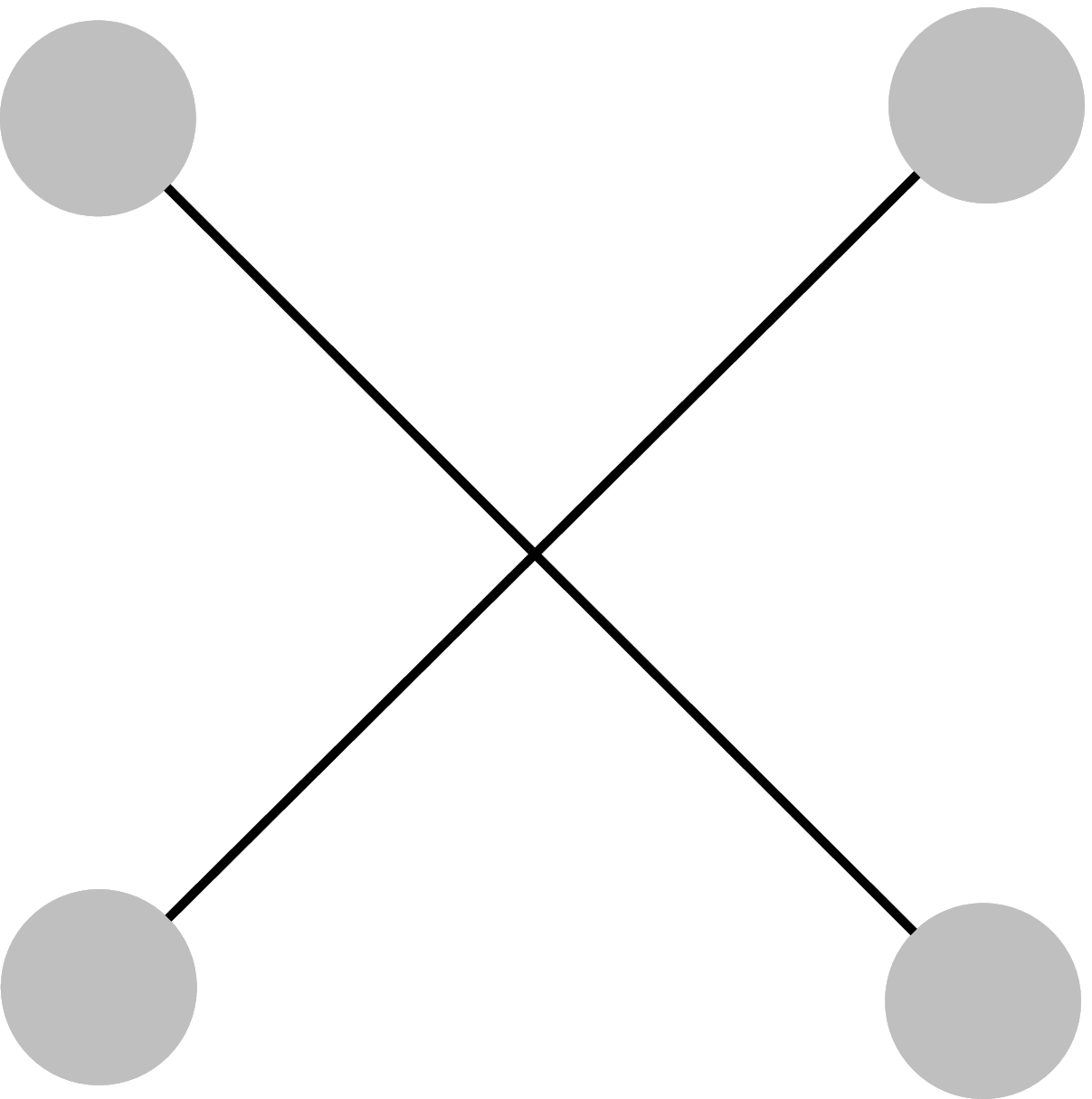}}
	\hspace{5pt}
	\subfloat[Thick edges ($t$ =3)]{
		\includegraphics[width=.15\textwidth]{./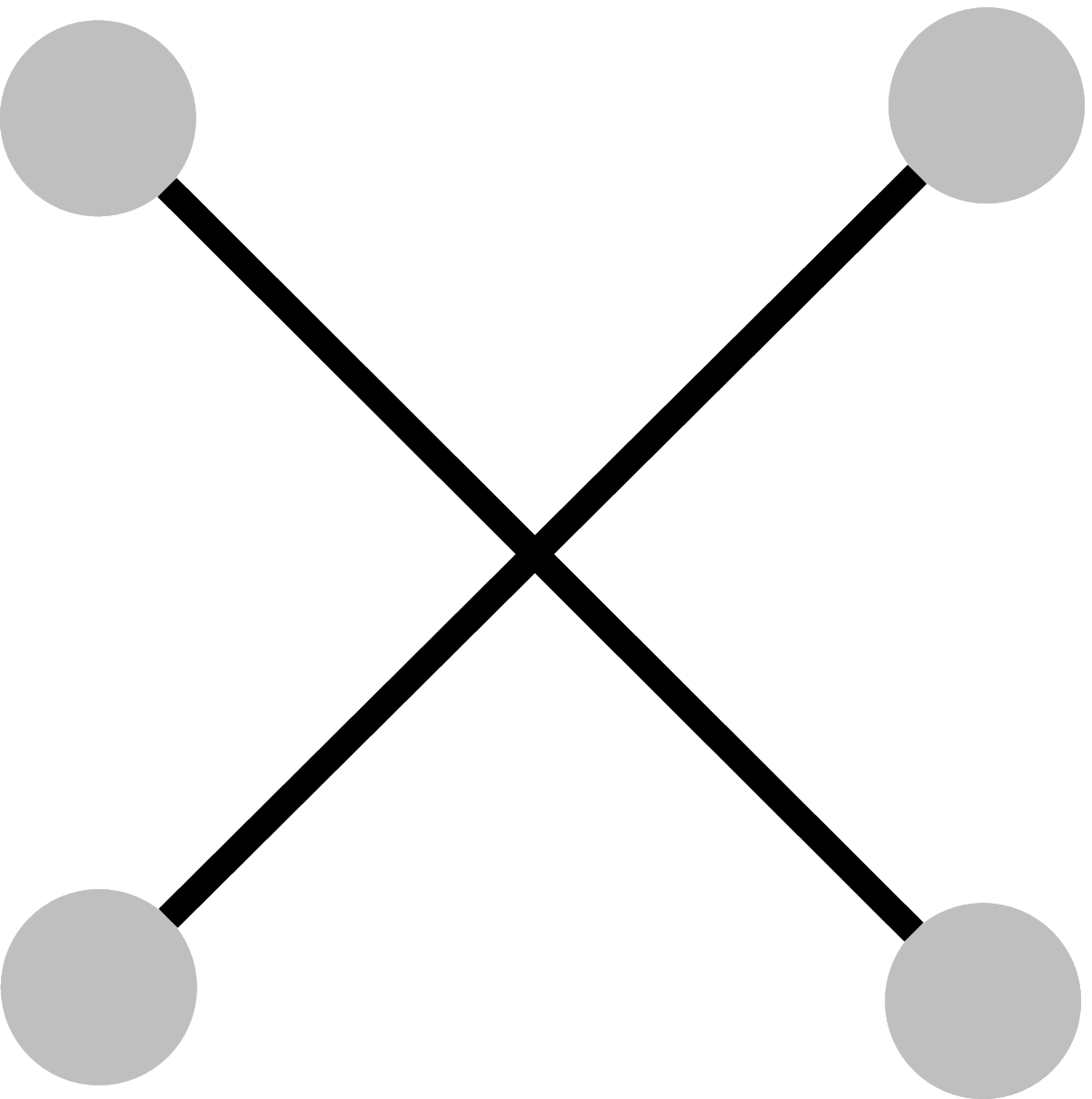}}
	\hspace{.5cm}
	\subfloat[Small]{
		\includegraphics[width=.07\textwidth]{./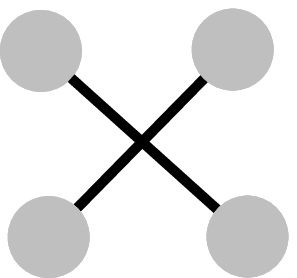}}
	\hspace{5pt}
	\subfloat[Large]{
		\includegraphics[width=.15\textwidth]{./figures/nodesize1.eps}}
	\caption{Drawing factors comparison\label{fig:examples}}
\end{figure*}

\subsection{Edge width $w$}
Similarly, we can find bounds for edge width.
For a fixed radius $r$, the edge width $w$ satisfies:
$0 \leq n \pi r^2 + w (L - 2mr) - w^2. cr(D) \leq \gamma A$. Refactoring gives us $$0 \leq (w - \frac{L-2mr}{cr(D)})^2  + \frac{n\pi r^2}{cr(D)} - (\frac{L-2mr}{cr(D)})^2 \leq \frac{\gamma A}{cr(D)}.$$ Thus, it gives
$ (\frac{L-2mr}{cr(D)})^2 - \frac{n\pi r^2}{cr(D)}  \leq (w - \frac{L-2mr}{cr(D)})^2   \leq \frac{\gamma A}{cr(D)} (\frac{L-2mr}{cr(D)})^2 - \frac{n\pi r^2}{cr(D)}  .$ For example, when radius $r$ is 0, it gives us $w \leq L/cr(D)$.

From the INKA-total equation, the larger the edge width $w$, the more ink is used. Therefore, it leads larger edge width results in lower ink-effectiveness. 
Figures~\ref{fig:examples}(c)-(d) depict drawings of the same graph using different edge width.

\begin{figure*}\centering
	\subfloat[Parallel edges]{
		\includegraphics[width=.15\textwidth]{./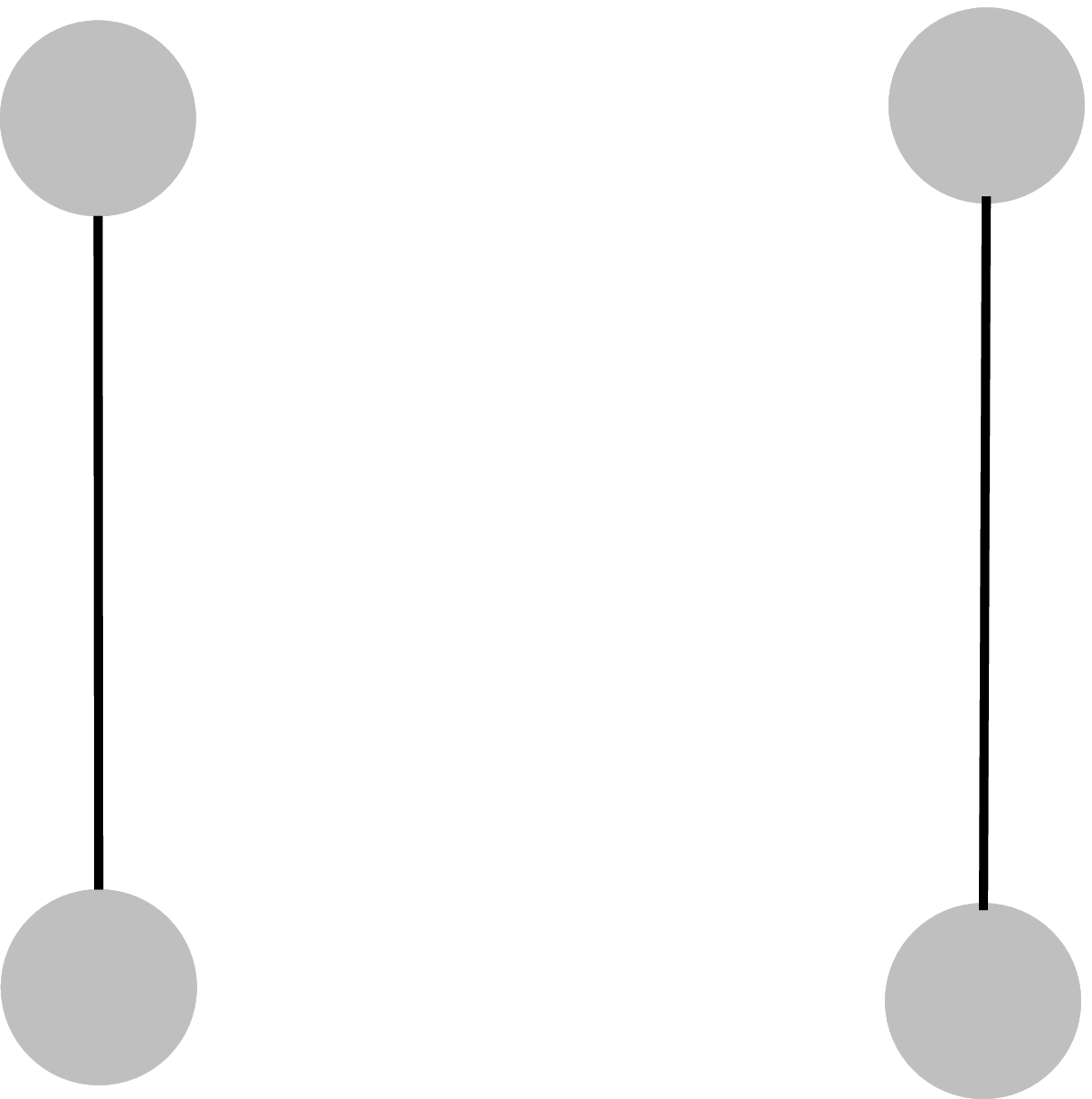}}
	\hspace{1.5cm}
	\subfloat[Crossing edges]{
		\includegraphics[width=.15\textwidth]{./figures/cross.eps}}
	\hspace{1.5cm}
	\subfloat[Crossing edges]{
		\includegraphics[width=.15\textwidth]{./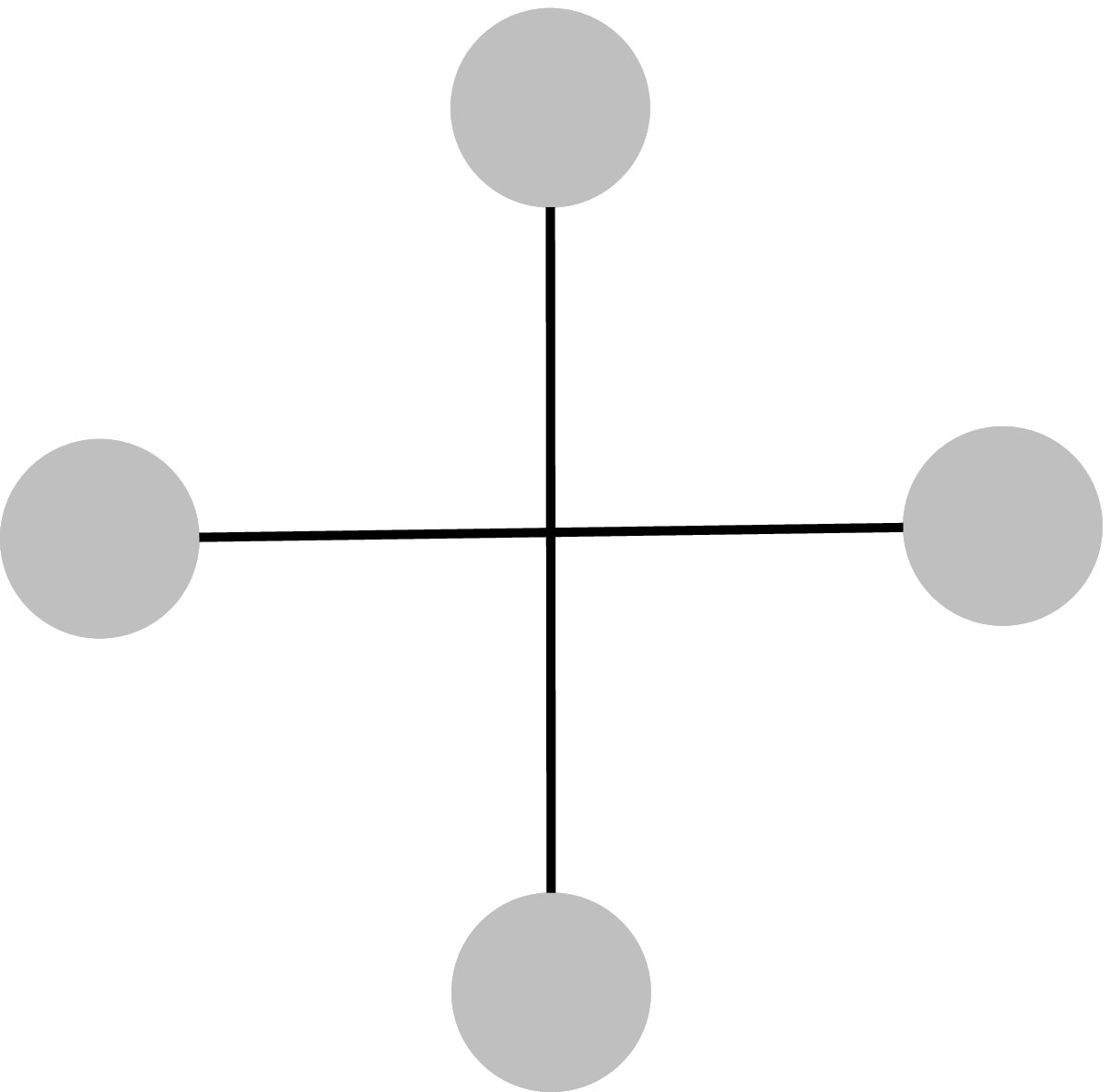}}
	\caption{Ink Comparison of drawings\label{fig:paraVSCross}}
\end{figure*}

\subsection{Edge crossing $cr(D)$}
Now let us consider edge crossings. Figure~\ref{fig:paraVSCross} depicts three different drawings of two pairs of nodes ($u$, $v$), ($w$, $z$) connected by two straight-line edges. Edges have thickness of 0.1.

For Figure~\ref{fig:paraVSCross}(a) and (b), the nodes of radius of 1 are located at the square (0,0), (0,10), (10,0), (10,10).
Two edges (u,v) and (w,z) are non-crossing in in Figure~\ref{fig:paraVSCross}(a); the total ink used is $ink(D) = 4 \pi r^2 + w(2 * 10 - 2 * 2 * r) = 4\pi + 0.1(20 - 4) = 14.16$. In Figure~\ref{fig:paraVSCross}(b), the edges (u,v) and (w,z) are crossing. The ink used is $ink(D) = 4 \pi r^2 + w(2 * 10 \sqrt{2} - 2 * 2 * r) - w^2 * 1= 4\pi + 0.1 * (20 \sqrt{2}- 4) - 0.1^2 = 14.98$.
Thus, in this example drawing with parallel edges is better, i.e., less ink and no crossing. This is an expected result.

Figure~\ref{fig:paraVSCross}(c) differs from Figure~\ref{fig:paraVSCross}(a) by an edge crossing. The ink $ink(D) = 4 \pi r^2 + w(2 * 10 - 2 * 2 * r) - w^2 * 1 = 4 \pi + 0.1*(20 - 4) - 0.1^2 = 14.15$. 
Thus, compared with Figure~\ref{fig:paraVSCross}(b), Figure~\ref{fig:paraVSCross}(c) has the same number of crossings (1), but is more ink-effective.
Overall, Figure~\ref{fig:paraVSCross}(c) is the most ink-effective of the three, despite the crossing.

In the INKA-total equation~\ref{eq:inka1}, when $r$, $t$, $n$ and $m$ are constants, the total of ink $ink(D)$ is proportional to $L - cr(D)$. Thus, to be more ink effective, the total edge length is small (e.g., reducing readability) and the number of edge crossings may be high (e.g., increasing ambiguity).

\subsection{Amount of ink $ink(D)$}

The equation for total ink in the drawing can be written as
$ink(D) = n \pi r^2 + w(L - 2mr) - w^2.cr(D) = n \pi r^2 - 2mwr + wL - w^2 . cr(D)$. That is, $$ink(D) = \pi n (r - \frac{mw}{\pi n})^2 + wL -w^2.cr(D) - \frac{m^2 w^2}{\pi n}
.$$ Hence, the minimum ink amount $ink(D)$ is $${\tt min}\ ink(D) = wL -w^2.cr(D) - \frac{m^2 w^2}{\pi n}.$$ The minimum value of $ink(D)$ is achieved when the radius $r = \frac{wd}{\pi}$, where $d$ is the graph density $m/n$.

\subsection{Remarks}
Here, we give some remarks of the \INKA~model in the context of scaling and zooming.

\subsubsection{Scaling}
A common method to improve readability of a drawing is to scale the node positions. This is useful in many cases, especially when the nodes are placed too close to one another.

Consider a simple scaling that simply scales (up) node positions without changing neither node sizes nor edge thickness. 
Let $s$ be the scale factor ($s$ is greater than 1).

Each edge length $l$ is scaled to $s^2 l$. The total edge length $L'$ becomes $s^2 L$. The total area required becomes $s^2 A$. So the total ink used in the scaled drawing is given by:
$$
ink(D') = n \pi r^2 + w (s^2 L - 2mr) - w^2 . cr(D).
$$
The ink difference between the scaled drawing and the original drawing becomes:
$$ink(D') - ink(D) = w(s^2 L - L) = w (s^2 -1) L. $$ This gives the following lemma:
\begin{lemma}
	The amount of ink difference between a scaled drawing and the original one is proportional to the total edge length and edge width in the original drawing.
\end{lemma}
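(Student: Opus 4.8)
The plan is to prove the claim by directly applying the INKA-total equation of Equation~\ref{eq:inka1} to both the original drawing $D$ and the scaled drawing $D'$, and then subtracting the two expressions term by term. Since the statement concerns the \emph{difference} $ink(D') - ink(D)$, I expect most terms in the INKA-total formula to cancel, so the real work reduces to tracking which drawing factors the scaling operation changes and which it leaves invariant.

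First I would record that the scaling under consideration is the \emph{simple} scaling described above: it moves node positions but leaves the node radius $r$, the edge width $w$, the node count $n$, and the edge count $m$ all unchanged. Consequently the vertex-ink term $n\pi r^2$ and the disk-subtraction term $-2mwr$ are identical in $D$ and $D'$ and vanish upon subtraction. The only layout factor that changes is the total edge length, which by hypothesis becomes $L' = s^2 L$.

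The next step is to argue that the edge-crossing count is also preserved, i.e.\ $cr(D') = cr(D)$. This is the one point that is not merely a cancellation of identical symbols: I would justify it by observing that a uniform rescaling of all node positions is a similarity transformation of the plane, which carries straight-line edges to straight-line edges and preserves their incidences, so no crossing is created or destroyed. With this established, the overlap term $w^2\,cr(D)$ is common to both drawings and cancels as well.

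What remains after these cancellations is the single edge-ink contribution, giving
$$ink(D') - ink(D) = w\,(s^2 L) - w\,L = w\,(s^2 - 1)\,L.$$
Reading off this expression, the difference is a product of the edge width $w$, the original total edge length $L$, and the fixed factor $(s^2-1)$; hence it is (jointly) proportional to $w$ and to $L$, which is exactly the assertion of the lemma. The main obstacle is therefore not the algebra but the invariance of $cr(D)$ under scaling; once that observation is in place, the conclusion follows immediately.
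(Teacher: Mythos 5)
Your proposal is correct and follows essentially the same route as the paper: apply the INKA-total equation to both drawings, use the invariance of $r$, $w$, $n$, $m$ and the substitution $L' = s^2 L$, and subtract to obtain $ink(D') - ink(D) = w(s^2-1)L$. The only difference is that you explicitly justify the invariance of $cr(D)$ under uniform scaling, a step the paper leaves implicit, which is a small but worthwhile addition.
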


\subsubsection{Zooming}
Now let us consider zooming. Zooming is different from scaling, in that scaling only scales node positions, whereas zooming scales the node positions, node sizes and edge thickness. 

For a zooming of $s$ times, all nodes and edges are enlarged by $s^2$ times. 

The total ink becomes:
$$ink(D') = s^2 . ink(D) = s^2 n \pi r^2 + s^2 w (L  - 2mr) - s^2 w^2.cr(D)).$$ 

The amount of ink difference is $$ink(D') - ink(D) = (s^2-1) . ink(D).$$ Intuitively, we can deduce that zoom requires more ink than scaling.

Furthermore, $ink(D') = s^2 ink(D) \leq \gamma s^2 A = \gamma A'$. Thus, this gives the following lemma:

\begin{lemma}
	If the original drawing satisfies a drawing ratio $\gamma$, the zoomed drawing also satisfies the drawing ratio.
\end{lemma}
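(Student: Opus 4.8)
The plan is to lean on the two facts already extracted in the zooming discussion: that the total ink of the zoomed drawing obeys $ink(D') = s^2 \, ink(D)$, and that zooming scales the drawing area to $A' = s^2 A$. Once these are in hand, the statement reduces to a single monotone manipulation of the INKA-area condition, so the argument is essentially a one-line inequality chain.

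First I would invoke the hypothesis that the original drawing satisfies the drawing ratio, i.e. $ink(D) \leq \gamma A$ (equivalently $ink(D)/A \leq \gamma$, as in the INKA-area inequality). Since the zoom factor satisfies $s > 1$, the multiplier $s^2$ is strictly positive, so multiplying both sides of this inequality by $s^2$ preserves its direction and gives $s^2 \, ink(D) \leq \gamma (s^2 A)$. Substituting $ink(D') = s^2 \, ink(D)$ on the left and $A' = s^2 A$ on the right then yields $ink(D') \leq \gamma A'$, which is exactly the claim that the zoomed drawing satisfies the drawing ratio $\gamma$; dividing by $A' > 0$ recovers the density form $ink(D')/A' \leq \gamma$ if one prefers to state it that way.

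The computation itself is routine, so the only step deserving care is making explicit that the area scales by the \emph{same} factor $s^2$ as the ink under the paper's zoom convention. This common scaling is what forces the two factors of $s^2$ to cancel in the density ratio, so that the drawing density is in fact left invariant by zooming rather than merely kept below the threshold. I would therefore state and justify the area-scaling identity $A' = s^2 A$ before performing the final substitution, since the entire conclusion rests on the numerator and denominator being magnified identically.
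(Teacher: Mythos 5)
Your proposal is correct and matches the paper's own argument, which is exactly the one-line chain $ink(D') = s^2\, ink(D) \leq \gamma s^2 A = \gamma A'$ stated just before the lemma. Your additional care in making the area-scaling identity $A' = s^2 A$ explicit is a minor expository improvement but not a different route.
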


\section{Layout examples \label{sec:studies}}
We now evaluate the \INKA~model using three common graph drawing approaches, which are chosen to demonstrate the usefulness of \INKA. We use our \INKA~model to show the underlying intuitions of these approaches.

\subsection{Example 1: Planar graph drawing}
Planar graph drawing has been extensively studied~\cite{DBLP:reference/crc/2013gd}.
Planar drawings have no edge-edge crossings; that is, $cr(D)$ is 0.
Thus, the total ink used becomes:
$$
ink(D) = n \pi r^2 + w (L - 2mr).
$$
This gives us several interesting results.

First, the total ink is proportional to the total edge length $L$. This is because the values of $r$, $w$, $m$ and $n$ are often considered as constants in planar graph drawing algorithms.  Thus, improving ink effectiveness for planar graph drawing is equivalent to minimizing the total edge length $L$.

\begin{lemma}
	The ink effectiveness for planar graph drawing is equivalent to the minimization of total edge length $L$.
\end{lemma}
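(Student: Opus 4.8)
The plan is to reduce the claim to a one-line monotonicity argument by isolating the only layout-dependent quantity in the planar ink formula. I would start from the specialization of the INKA-total equation~\ref{eq:inka1} to the planar case, where $cr(D)=0$ is already established, giving $ink(D) = n\pi r^2 + w(L - 2mr)$. The immediate goal is to recast this expression so that the dependence on the layout (through $L$) is cleanly separated from the terms that a planar drawing algorithm treats as fixed.

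The key algebraic step is to expand and regroup: $ink(D) = \bigl(n\pi r^2 - 2mwr\bigr) + wL$. In planar graph drawing the node count $n$, the edge count $m$, the disk radius $r$, and the edge width $w$ are all constants of the rendering specification, so the parenthesized quantity is a constant independent of the placement of the vertices. Writing $C := n\pi r^2 - 2mwr$, this reads $ink(D) = C + wL$, exhibiting $ink(D)$ as an affine function of $L$ with slope $w$. Since $w>0$ for any genuine bold drawing, this function is strictly increasing in $L$.

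I would then invoke the definition of \emph{ink effectiveness} from the preliminaries: a drawing that uses less ink for the same graph is more ink-effective, so maximizing ink effectiveness coincides with minimizing $ink(D)$. Because $L \mapsto C + wL$ is a strictly increasing transformation, minimizing $ink(D)$ over all admissible layouts has exactly the same set of optimizers as minimizing $L$; a strictly monotone reparametrization preserves the argmin. Hence the two optimization problems are equivalent, which is precisely the statement.

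There is essentially no serious obstacle here, as the argument is a direct substitution followed by a monotonicity observation; the one point warranting care is the degenerate case $w=0$. When the edge width vanishes, the edges contribute no ink, $ink(D)=n\pi r^2$ becomes constant, and the equivalence collapses. I would therefore state the result under the standing assumption $w>0$, so that the slope of the affine relation is strictly positive and the equivalence of optimizers is genuine rather than vacuous.
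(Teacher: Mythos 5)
Your proof is correct and follows essentially the same route as the paper: specialize the INKA-total equation to $cr(D)=0$, treat $n$, $m$, $r$, $w$ as constants, and observe that $ink(D)$ is then an increasing affine function of $L$, so minimizing ink coincides with minimizing $L$. Your explicit handling of the degenerate case $w=0$ is a small refinement the paper omits, but it does not change the argument.
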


This is a remarkable result. Many force-directed algorithms that draw planar or near-planar graphs can indirectly achieve the minimum total edge length.

Second, given a planar drawing of a graph $G$, the edge width is constrained by: $$w \leq (\gamma A - n \pi r^2)/(L-2mr).$$ When $r$ is 0, then $w$ $\leq$ $\frac{\gamma A}{L}$.

Third, another interesting result is that $m$ is bounded by 3$n$ - 6 for planar graphs. For maximal planar graph, the INKA-area inequality gives 
$$ink(D) = n \pi r^2 + w [L - 2(3n-6)r] \leq \gamma A$$.
Thus, the total edge length is bounded by: $$L \leq \frac{1}{w} [\gamma A-12rw - n ( \pi r^2 - 6wr) ].$$
This gives the maximum edge length $L_{max}$ is $1/w [\gamma A-12rw - n (\pi r^2 - 6wr) ]$ for maximal planar graphs.
For example, when $r$=1 and $w$=1, then $L_{max} \approx \gamma A - 12 + 2.85n$.

\subsection{Example 2: Equal-edge-length drawing}
This section presents the INKA model for a special class of graph drawing, which all edges have the same length.
Several force-directed algorithms implicitly optimize for equal (fixed) edge length, via so-called 'preferred edge length'. Examples include, for example, the work of ~\cite{Kamada89analgorithm,FruRei91,Frick:GEM1994,Davidson:1996}. 


Now, let consider a drawing in which all edges have the same length of $l$. The total edge length becomes $L= ml$ for $m$ edges. Our INKA-area inequality in Equation~\ref{eq:drawingratio} gives:
$$
0 \leq ink(D) = n \pi r^2 + w (ml - 2mr) - w^2.cr(D) \leq \gamma A.
$$

One can determine some bounds. First, when $r$ is 0, the inequality gives the neccessity condition for $l$: $w^2 . cr(D) \leq wml \leq \gamma A  + w^2. cr(D)$. Thus, the length $l$ is bounded by: 
$\frac{w.cr(D)}{m} \leq l \leq  \frac{\gamma A}{wm} + \frac{w.cr(D)}{m}.$
Second, the number of crossing in an equal-length drawing is bounded by $cr(D) \leq \frac{ml}{w}.$ Often the values $l$ and $w$ are fixed, the number of crossings is satisfied:
\begin{lemma}
	The number of crossings $cr(D)$ in an equal-length drawing is bounded by the number of edges.	
\end{lemma}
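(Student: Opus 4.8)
The plan is to obtain the bound from a single structural fact: the ink $ink(D)$ is an area and hence non-negative, $ink(D) \geq 0$. This is the \emph{left} half of the inequality displayed for the equal-length case, as opposed to the density half $ink(D) \leq \gamma A$. The point worth isolating is that the crossing bound follows from non-negativity of ink alone, and is therefore independent of the chosen density $\gamma$.

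First I would specialize the INKA-total equation~\ref{eq:inka1} to the equal-length setting. Substituting $L = ml$ yields $ink(D) = n\pi r^2 + w(ml - 2mr) - w^2\,cr(D)$. I would then impose $ink(D) \geq 0$ and solve for the crossing term, giving $w^2\,cr(D) \leq n\pi r^2 + wm(l - 2r)$, that is, $cr(D) \leq \frac{n\pi r^2}{w^2} + \frac{m(l-2r)}{w}$. This is just a rearrangement, so no real work is hidden here; the content is entirely in the non-negativity step.

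Next I would specialize to $r = 0$, the same simplification already used to derive the length bound $\frac{w\,cr(D)}{m} \leq l$ immediately above the statement. The $n$-dependent term then vanishes and the inequality collapses to $cr(D) \leq \frac{ml}{w} = \left(\frac{l}{w}\right) m$. Since the edge length $l$ and the edge width $w$ are held fixed in an equal-length drawing, the factor $l/w$ is a constant, so $cr(D) = O(m)$: the number of crossings is at most a constant multiple of the number of edges, which is the asserted bound.

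The main obstacle is interpretive rather than computational. For $r > 0$ the surviving term $\frac{n\pi r^2}{w^2}$ is $O(n)$, not $O(m)$, so the honest bound involves both $n$ and $m$; to phrase the conclusion purely ``in terms of the number of edges'' I must either take $r = 0$, as above, or invoke sparsity of the graphs of interest (so that $n = O(m)$) and absorb the vertex term into the edge count. I would flag this dependence explicitly and record that the cleanest, $\gamma$-independent form of the result is the $r = 0$ inequality $cr(D) \leq ml/w$.
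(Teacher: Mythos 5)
Your proposal matches the paper's own derivation: the paper likewise obtains $w^2\,cr(D) \leq wml$ from the left half $0 \leq ink(D)$ of the INKA-area inequality at $r=0$, concludes $cr(D) \leq ml/w$, and then treats $l$ and $w$ as fixed so that the bound is a constant multiple of $m$. Your explicit flagging of the extra $\tfrac{n\pi r^2}{w^2}$ term when $r>0$ is a caveat the paper glosses over, but the core argument is the same.
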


These conditions must be met in order to achieve a drawing in which all edges have equal length.

\subsection{Example 3: Partial edge drawing}
Now we diskuss our \INKA~model for \emph{partial edge drawing}~\cite{bruckdorfer2012progress}. Partial edge drawing avoids crossings by dropping the middle part of edges and showing only the remaining edge parts.

Let $p$ is the partial edge ratio, which specifies the proportion of edges are still displayed; for example, $p$ can be 0.1, 0.2, 0.5 to 1. When $p$ is 1, this is equal to the normal drawing (full edges).

From INKA model, the total ink for a partial drawing $D'$ of $D$ is approximated by:
$$
ink(D') = n \pi r^2 + w (pL - 2mr) - w^2.cr(D'),
$$
where $D'$ is the partial drawing with the same disk radius $r$ and edge width $w$.

For partial drawings, it is believed that the amount of ink $ink(D')$ of the partial drawing $D'$ is smaller than the amount of ink in $ink(D)$. Also the number of crossings $cr(D')$ in a partial drawing is often expected to be smaller than the number of crossing  $cr(D)$ in the normal drawing.

In fact, our INKA model gives the ink difference $ink(D') - ink(D) =  wpL - wL - w^2.cr(D') + w^2.cr(D) = wL (p-1) + w^2 (cr(D) - cr(D'))$. Thus, $ink(D') \leq ink(D)$ only if $wL (p-1) + w^2 (cr(D) - cr(D')) \leq 0$. This gives the necessity condition of $cr(D) - cr(D') \leq (1-p) L/w $. Remarkably, in realistic settings the total edge length $L \gg w$ and thus the necessity condition always holds. This is the reason for $ink(D') \leq ink(D)$.

Another interesting result of using INKA model is that one can work out the bounds for $cr(D')$. For example, when $r$ is 0, the inequality $ink(D') \leq ink(D) \leq \gamma A$ gives the bounds for the number of crossings $cr(D')$:
$pL/w - \gamma A / w^2 \leq cr(D') \leq p L / w.$
Besides, 
$w.cr(D') \leq p L \leq \gamma A / w + w.cr(D')$.

In general, partial edge drawing reduces ambiguity and edge readability at the same time.

\section{Evaluation \label{sec:eval}}

This section presents our evaluation of the \INKA~model using real-world graphs. The aim is to have an approximation of the ink used for various graphs using different layout algorithms.

\subsection{Data sets}
Here, we use several ``benchmark'' data sets, which are from the Hachul library, Walshaw's Graph Partitioning Archive, the sparse matrices collection~\cite{Davis:2011} and the network repository~\cite{nr-aaai15}. These data sets include commonplace types of graphs: grid-like graphs and scale-free graphs. Table~\ref{table:data} shows the graphs used in our experiment.

\begin{table}[t]\centering
	\renewcommand{\arraystretch}{1.5}
	\setlength{\tabcolsep}{20pt}
	\caption{\label{table:data}Data sets}
	\begin{tabular}{|l|c|c|c|c|}
		\hline
		graph &	$|V|$	& $|E|$ \\
		\hline
		can\_144 & 144 &	576 \\
		G\_2 	& 4970	& 7400 \\
		G\_3	& 2851	& 15093 \\	
		G\_4	& 2075	& 4769 \\
		G\_15 	& 1785	& 20459 \\	
		mm\_0	& 3296	& 6432\\	
		nasa1824	& 1824	& 18692  \\
		yeastppi & 2361	& 7182 \\		
		\hline
	\end{tabular}	
\end{table}

\subsection{Design}
We compare the amount of ink computed by the \INKA~model for different layouts of the data sets. For layout, we used on the standard \emph{FM3} layout~\cite{Hachul:2004} and its variants, which are implemented in OGDF~\cite{MarkusChimani2012}. The FM3 variants include Multi-level Fast (Fast), Multi-level Nice (Nice) and Multi-level NoTwist (NoTwist).

\subsection{Results}
We computed the number of crossings and the total edge length for each resulting layout.
Figure~\ref{fig:stats} shows the statistics of the graph layout results. The y-axis shows a logarithmic scale. As shown in the figure, the number of crossings varies a lot between the graphs and the layouts. The total edge length appears to be proportional to the number of edges $M$ and the number of vertices $N$. 

\begin{figure*}\centering
	\includegraphics[width=0.95\linewidth]{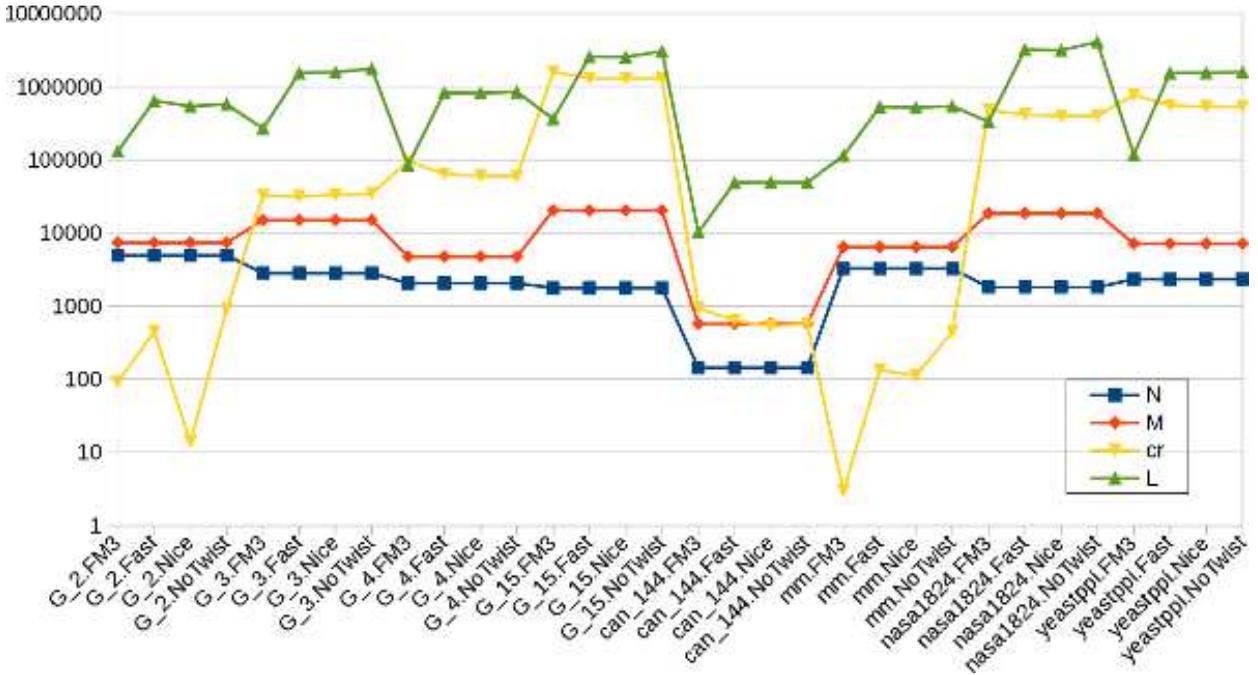}
	\caption{\label{fig:stats}Statistics of graph layouts.}
\end{figure*}
\begin{figure*}
	\centering
	\includegraphics[width=.95\linewidth]{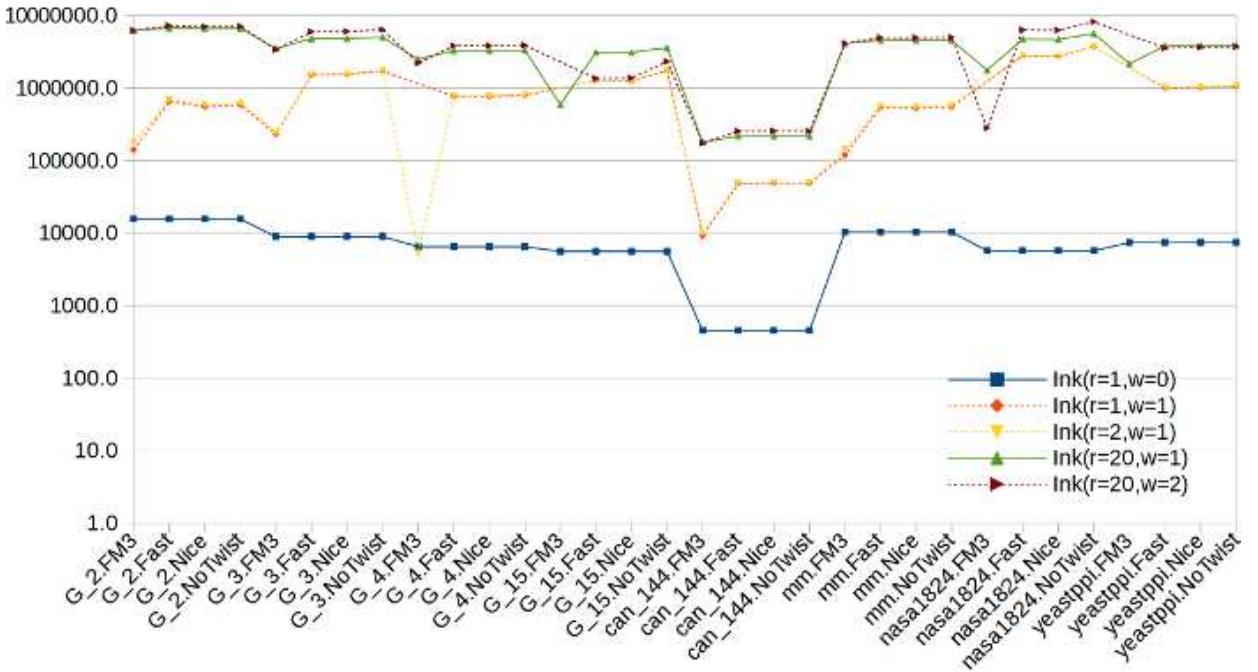}
	\caption{\label{fig:ink}Ink approximation of all data sets using INKA-total equation.}
\end{figure*}

We compared the amount of ink of the same graph layouts using different settings of vertex radius and edge width. The first setting $r$=1 and $w=0$ is used as the base.
This means the drawing only shows the vertices, but not the edges. The other settings are ($r$=1, $w$=1), ($r$=2, $w$=1), ($r$=20, $w$=1) and ($r$=20, $w$=2) to test the variations of node radius and edge width.

Figure~\ref{fig:ink} shows the estimated amount of ink using \INKA~for all of the graphs. The y-axis shows a log scale. There are several interesting results. First, the figure depicts that the larger the values of $r$ and $w$, the more ink is used. For example, the blue line ($r$ = 1 and $w$ = 0) is as the base line for all the other lines. Drawing only vertices require less ink than the other settings.
Second, when the radius is changed slightly (for example, from 1 to 2) the total amount of ink is slightly changed (see the red and the yellow lines).
Third, when the edge width slightly varies, the total amount of ink also changes slightly (see the green and plum color lines). Fourth, the amount of ink may decrease with larger node radius or with
large edge width.
Last but not least, FMMM layouts appear to use less ink than the other layouts, for all data sets.

\section{Discussions \label{sec:diskussion}}

\subsection{Data-ink in graph visualization}

An important criteria in visualization is the data-ink ratio by Tufte~\cite{tufte1983visual}.
In the perspective of graph visualization, the data-ink (or accordingly we call \emph{graph-ink}) is the non-erasable ink that presents nodes and edges. Removing the data-ink from the drawing would cause a missing of node(s) or edge(s).  The graph-ink ratio is the proportion of the graph-ink compared to the total amount of ink (or pixels) used in the drawing.

Maximizing the graph-ink ratio is equivalent to minimizing the total graph-ink used to present the graph.
The less ink used in $D$ to draw $G$, the better. In fact, ink minimization has been studied in graph layout algorithms \cite{sugiyama1994methods,north2001online,gansner2006improved,gansner2011multilevel}.

\subsection{Drawing factor relationship}
From our \INKA~model, we summarize the relationship between drawing factors.
Figure~\ref{fig:inkdiagram} depicts a diagram that summarizes the relationship between the common drawing factors.

\begin{figure*}\centering
	\includegraphics[width=.7\textwidth]{./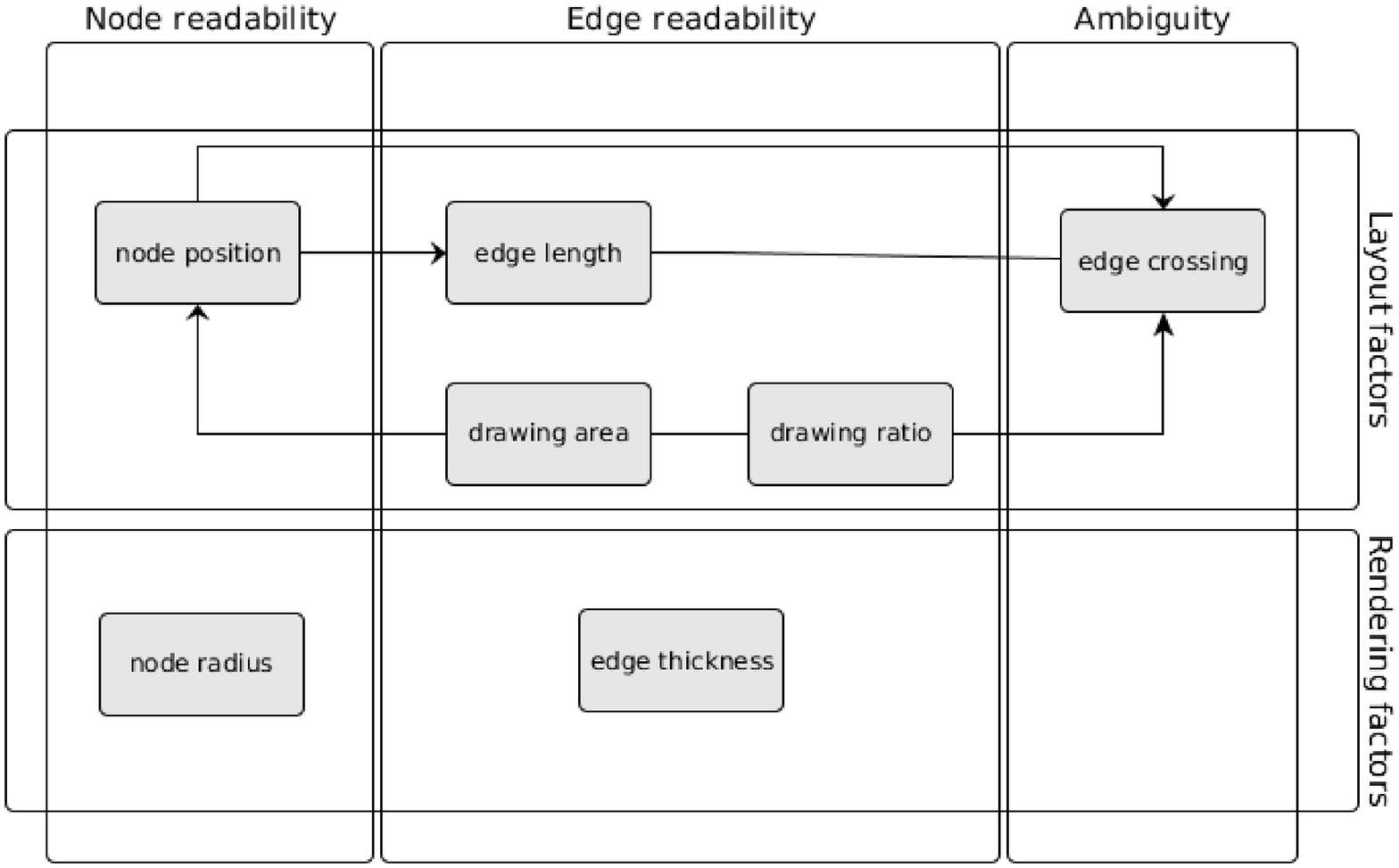}
	\caption{Diagram of Drawing factors\label{fig:inkdiagram}}
\end{figure*}

\subsubsection{Ink-effectiveness vs. Clarity}
In contrast, the ink expression in the Equation~\ref{eq:inka} can be interpreted as follows:
\begin{equation}\label{eq:readabityVSink}
ink(D) = clarity(V) + clarity(E) - overlap,
\end{equation}
where the clarity of nodes $clarity(V)$ is defined as the total amount of ink used for drawing all nodes;
the clarity of edges $clarity(E)$ is equal to the total amount of ink used for drawing all edges; and
the $overlap$ is account to the ambiguity.


From Equation~\ref{eq:readabityVSink}, one can deduce that the more ink-effective (or less ink used), the less readability and also the more ambiguity the drawing becomes. The larger the $overlap$, the less faithful~\cite{NguyenEH12a} the drawing becomes (e.g., more overlap makes it harder to derive the original graph from the drawing). Thus, this is the side effect of minimizing the total ink for graph drawing.

Often, increasing node size and edge thickness may improve clarity of the drawing.
Let us consider how an increase of two factors affect the ink-effectiveness.

Consider a graph $G$ and a fixed layout $D$ of $G$. Let us consider two drawings $D$ and $D'$ with variations in node size and edge thickness. Let $ink(D)$ and $ink(D')$ be the ink measures of $D$ and $D'$, respectively.

For drawings with the same $r$ (i.e. $r$=$r'$) but different edge widths  ($w$ and $w'$), the total ink difference between $D$ and $D'$ is given by:
$$ink(D') - ink(D) = (w' - w)[ L - 2mr - (w + w'). cr(D)].$$
Thus, thicker edges may or may not improve ink effectiveness; it depends on the number of crossings existing in the drawing. Interestingly, one could find 
that for a pair of widths $w$ and $w'$, the total ink is the same between two drawings if $L - 2mr = (w + w'). cr(D)$.

For drawings with same thickness $t$ (i.e., $t$= $t'$) but different node radius ($r$ and $r'$), the total ink difference  between $D$ and $D'$ is given by:
$$ink(D') - ink(D) = n \pi (r'^2 - r^2).$$ This implies that larger node size improves  (node) readability quadratically while reducing ink effectiveness.


\subsection{Limitations}
Our \INKA~model gives a fast way to approximate the total amount of ink used for a layout of a graph before the rendering has been achieved. The approximation of $overlap$ described in Section~\ref{sec:inka} is somewhat simplistic. However, in practice, the edge width $w$ is often set to a small value (so as to see the lines); thus, the approximation is still a good estimation of the amount of ink.

\section{Conclusion and Future work \label{sec:conclusion}}

In this paper, we have introduced a new \INKA~model and have applied the model to analyse common drawing factors used in graph drawing. The relationship between the most common drawing factors is encapsulated in the INKA-total equation (Eq. \ref{eq:inka1}) and INKA-area inequality in Section~\ref{sec:inka}. (Eq.\ref{eq:drawingratio}). The common drawing factors include edge crossing, total edge length, drawing area, node radius and edge width.

Overall, the \INKA~model gives a useful foundation to estimate the feasibility of a layout design for certain values of drawing factors. The new model also gives a way to  approximate the total ink used in a drawing.  
We have demonstrated several use cases of our \INKA~model. We also have presented our evaluation of \INKA~for real-world data sets using different layouts.


Our examples and experimental results of the \INKA~model have motivated several directions for future work. First, one can integrate \INKA~model into graph layout algorithms to better lay out and render graphs. Second, to be ink effective, the total edge length $L$ should be small while the number of crossings $cr(D)$ may be large. But good drawings should balance between ink-effectiveness while keeping the number of crossings (ambiguity) small. This can be formulated as a minimization problem of $Q(D) = \alpha. ink(D) + \beta. cr(D)$, for some non-negative numbers $\alpha$ and $\beta$. Third, it would be interesting to extend \INKA~model to model and understand about the relationships among other layout factors (such as minimum edge length, symmetry, node distribution, orthogonality and crossing angles) and other rendering factors (such as color and transparency).



\ifCLASSOPTIONcompsoc
  \section*{Acknowledgments}
\else
  \section*{Acknowledgment}
\fi

The authors would like to thank anonymous reviewers for very helpful feedbacks.





\bibliographystyle{abbrv}
\bibliography{inka,force,metric}

\vspace{-1cm}



\clearpage
\enlargethispage{-5in}

\appendices

\end{document}